\newtheorem{prop}{Proposition}
\newtheorem{proof}{Proof}
\let\classAND\AND
\let\AND\relax
\let\AND\classAND
\begin{document}
\title{Energy-efficient transmission policies\\ for the linear quadratic control\\ of scalar systems}
\author{\coord{Yifei}{Sun}{1},
	\coord{Samson}{Lasaulce}{2},
	\coord{Michel}{Kieffer}{1},
	\coord{Romain}{Postoyan}{2},
	\coord{Dragan}{Neši\'{c}}{3}
}

\address{\affil{1}{Université Paris-Saclay - CNRS - CentraleSupélec - L2S,\\ F-91192 Gif-sur-Yvette, France}
	\affil{2}{CRAN, CNRS-Université de Lorraine, F-54000 Nancy, France}
	\affil{3}{Department of Electrical and Electronic Engineering, The University of Melbourne, Parkville, 3010, Victoria, Australia.}}
\email{}
\englishabstract{This paper considers controlled scalar systems relying on a lossy wireless feedback channel. In contrast with the existing literature, the focus
	is not on the system controller but on the wireless transmit power
	controller that is implemented at the system side for reporting the
	state to the controller. Such a problem may be of interest, \emph{e.g.}, for the remote control of drones, where communication costs may have to be considered. Determining
	the power control policy that minimizes the combination of the dynamical system
	cost and the wireless transmission energy is shown to be a non-trivial
	optimization problem. It turns out that the recursive structure of
	the problem can be exploited to determine the optimal power control
	policy. As illustrated in the numerical performance analysis, in the scenario of a dynamics without perturbations, the
	optimal power control policy consists in decreasing the transmit power at the right pace. This allows a significant performance gain
	compared to conventional policies such as the full transmit power policy or the open-loop policy.}
\maketitle

\section{Introduction}

\label{sec:introduction}

The dominant paradigm in system control theory is to assume that information
exchanges between the controller(s) and the system(s) to be controlled
are perfect. When information exchanges occur over wireless channels,
this assumption may be questionable and even not realistic at all.
This is one of the reasons why there is an active research area at
the interface between control theory and wireless communications.
Among representative research works of this approach, we can quote
the following papers. The problem of imperfect communication
between the various components of a system is addressed in \citet{Hespanha07}.
In \citet{Delchamps90}, the problem of imperfect feedback is considered
in the case where the noise is caused by the quantization of transmitted
data. In \citet{Shi13}, it is shown how a finite communication data
rate impacts the controller design. The impact of fast fading wireless
channel fluctuations on the control design has been addressed, \emph{e.g.},
in \citet{Gatsis14}, \citet{Varma2020TAC}, and \citet{Balaghiinaloo2020ECC}. The coexistence of several controlled systems sharing the same communication channel prone to interference is considered
in \citet{Gatsis16}.

In the present paper, in contrast with the existing literature, the main
technical focus is not on the system controller but on the control
of the wireless transmit power implemented at the system side for
reporting its state to the controller through a wireless feedback
channel. This scenario may be of interest, \emph{e.g.}, in the remote
control of drones, when the control input is evaluated by a remote
controller from measurements of the state of the drone that are transmitted
over a wireless channel. The approach proposed has at least three
salient features. First, the transmit power is adapted to the wireless
feedback channel statistics and the system state. Second, the objective
pursued consists of a combination of a system control objective and
a communication objective (namely, the wireless transmission energy); managing the wireless transmit power is both relevant in terms of consumed energy and electromagnetic pollution. Third, this adaptation is performed in the presence of an additive
perturbation on the (linear) system dynamics and a multiplicative
noise for the wireless feedback channel (which corresponds to data
packet erasures). This complete framework has not been addressed yet
in the literature even in the simple case of scalar linear systems.
Good representatives of the closest literature are \citet{willems1976feedback,besson2000approximate,Primbs09}
where the authors also assume a multiplicative noise model for the
communication channel but do not focus on the wireless transmit power
control problem by both pursuing a system control objective and a
wireless transmission energy objective. Rather the cited papers focus
on the problem of system stability.

The paper is structured as follows. In Section~\ref{sec:problem},
the technical problem to be solved is formulated. Determining the
best power control policy is shown to amount to solving a non-trivial
multilinear problem. To solve it, we resort to an iterative search
technique described in Section~\ref{sec:with}. Then, in Section~\ref{sec:Numerical-experiments},
we conduct a numerical performance analysis to illustrate the benefits
of controlling properly the wireless transmit power. Conclusions and
perspectives are provided in Section~\ref{sec:conclusion}.

\section{Assumptions and Problem Formulation}

\label{sec:problem}

We consider a dynamical system whose state is scalar. The motivation
behind this assumption is that the problem appears to be of interest
even in this simple case and this makes the introduction of the proposed
framework clearer. The system state is denoted by $x\in\mathbb{R}$
and assumed to evolve according to the discrete-time state
equation 
\begin{equation}
x_{t+1}=ax_{t}+bu_{t}+d_{t},\label{eq:perfect}
\end{equation}
where $t \in \{1,...,T\}$, $T \geqslant 1$ being the considered time horizon, $\left(a,b\right)\in\mathbb{R}^{2}$, $u_{t}\in\mathbb{R}$
is the control input, and $d_{t}\sim\mathcal{N}\left(0,\mathbf{\sigma}_{\text{d}}^{2}\right)$
is a Gaussian random state perturbation. One assumes that the random variables $x_{t}$
and $d_{t'}$ are independent for all $t'\geqslant t$.

\begin{figure}[h]
\centering \includegraphics[scale=0.85]{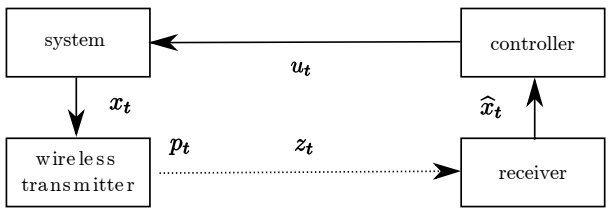} \caption{Communication and control setup}

\label{fig:system} 
\end{figure}

At time (or for data packet) $t$, the system transmits its state $x_{t}$ to a remote controller in charge of computing the control input $u_{t}$, see
Figure~\ref{fig:system}. This transmission is performed with power
$p_{t}$ over a wireless communication channel modeled by a classical baseband additive communication channel of the form $y_t = h_t s_t + w_t$, where $s_t \in \mathbb{C}$ represents the coded signal, $h_t \in \mathbb{C}$ is the channel coefficient, and $w_t \sim \mathcal{N}(0,\sigma^2)$ the i.i.d. Gaussian communication noise. From this communication model, we will only exploit the two following key quantities: $g_t = |h_t|^2$ and the communication noise variance $\sigma^2$. In this paper, no particular assumptions are made on the channel i.i.d. random process $(g_t)_{t \in \{1,...,T\}}$ except in Prop. 1 and 5 and for simulations, where we assume an exponential p.d.f. with mean $\overline{g}$ (that is, a classical Rayleigh fading model for $h_t$). The message is assumed to be successfully
received when the signal-to-noise (SNR) ratio at the receiver is sufficient
to allow error-free decoding. This occurs with probability 
\begin{equation}
\begin{aligned}\pi\left(p_{t}\right)= & \Pr\left[\frac{g_{t}p_{t}}{\sigma^{2}}\geqslant\gamma\right],\end{aligned}
\label{eq:snr}
\end{equation}
where $g_{t}$ is the channel gain at time $t$, $\sigma^{2}$ the variance of the (additive and zero-mean Gaussian) communication noise, and $\gamma$ is the
SNR threshold. Therefore, the controller receives 
\begin{equation}
\widehat{x}_{t}=x_{t}z_{t}\label{eq:noise}
\end{equation}
where $z_{t}\sim\text{Ber}\left(\pi_{t}\right)$ is a realization
of a Bernoulli random variable with parameter 
\begin{equation}
\pi\left(p_{t}\right)=\Pr\left[z_{t}=1\right].\label{eq:SuccessRate}
\end{equation}
When the message is too noisy to be decoded successfully, we have
that $z_{t}=0$, which corresponds in practice to a data packet loss.
In what follows, the (packet) success probability $\pi\left(p_{t}\right)$
is denoted by $\pi_{t}$ to make the notations simpler.

We assume that $z_{t}$ is known at the receiver (this is typical
when the communication system uses a cyclic redundancy check to verify
the integrity of the received message). When $z_{t}=1$, a static
feedback is evaluated as 
\begin{equation}
u_{t}  =k\widehat{x}_{t}=kx_{t}z_{t}
\label{eq:control}    
\end{equation}
where $k\in\mathbb{R}$. When $z_{t}=0$, the controller is unable
to acquire the state $x_{t}$ and the chosen control input is $u_{t}=0$, as proposed,
\emph{e.g.}, in \citet{Schenato09,Quevedo14}. The control input is
then fed back to the system. We assume that the channel from the controller
to the system is perfect. This motivation behind this assumption is twofold. It it fully relevant in communications scenarios where there is an asymmetry in terms of resources (e.g., in terms of transmit power, bandwidth, or computation resources). It also makes our analysis more tractable and easier to interpret for a first step into the direction taken in this paper.

We assume that $x_{1}\sim\mathcal{N}\left(0,\mathbf{\sigma}_{\text{x}}^{2}\right)$
and consider an optimal controller with finite horizon $T$. The considered
problem is to find a transmission power policy $p_{1:T}=\left(p_{1},p_{2},\dots,p_{T}\right)^{\text{T}}$,
to be applied over the control horizon, that minimizes 
\begin{equation}
\overline{J}_{1:T}\left(p_{1:T}\right)=\mathbb{E}_{z_{1:T},d_{1:T}}\left[\sum_{t=1}^{T}\left(qx_{t}^{2}+ru_{t}^{2}+p_{t}\right)\right]\label{eq:goal}
\end{equation}
with bounded transmission power 
\begin{equation}
0\leqslant p_{t}\leqslant P_{\text{max}},t=1,\dots,T,\label{eq:PowerConstraint}
\end{equation}
where $\mathbb{E}_{z_{1:T},d_{1:T}}[\cdot]$ indicates that the expectation is performed with respect to $z_1,\dots,z_T$ and $d_1,\dots,d_T$, $q >0$, and $r>0$. The expectation is performed with respect to $z_{1:T}$, which depends
on the transmission powers, 
and with respect to
$d_{1:T}$. One has $z_{i}$ and $d_{j}$ independent for all $i=1,\dots,T$
and $j=1,\dots,T$. The presence of the wireless transmission energy
term in the above cost allows the transmission to be energy-efficient (for more details see, \emph{e.g.}, \cite{lasaulce2011game}). Technically, the presence of this term makes the problem non-trivial. Indeed, without
any energy cost associated with the wireless system, the problem boils
down to a classical finite-horizon LQR (linear quadratic regulation) problem \citet{shaiju2008formulas}
and the cost is trivially minimized by transmitting at full power
all the time. Because of the presence of the wireless transmission
energy cost, a tradeoff needs to be found between the conventional
system cost and the communication cost associated with the feedback
channel. Technically, the formulated optimization problem turns out
to be non-trivial as explained in the next section, which proposes
an iterative numerical technique to determine the
optimal transmission power control policy.

\section{Proposed solution}
\label{sec:with}

First, let us reformulate the optimization problem associated with
\eqref{eq:goal} and \eqref{eq:PowerConstraint}. This is the purpose
of the following proposition.
\begin{prop}
\label{prop:COP} Assume that the i.i.d. random process $(g_t)_{t \in \{1,...,T\}}$ follows a Rayleigh fading law with mean $\mathbb{E}(g_t) = \overline{g}$. Denote by $\pi_{1:T}=\left(\pi_{1},...,\pi_{T}\right)^{\text{T}}$
the sequence of success probabilities. The problem of minimizing $\overline{J}_{1:T}\left(p_{1:T}\right)$
with respect to $p_{1:T}$ under the constraints \eqref{eq:PowerConstraint}
can be reformulated as 
\begin{align}
\min_{\pi_{1:T}} & \ C\left(\pi_{1:T}\right)\label{eq:OptimProblem}\\
\text{s.t.} & -\frac{\gamma\sigma^{2}}{ \overline{g} \ln\pi_{t}}-P_{\text{max}}\leqslant0,\text{ }t=1\dots T\nonumber 
\end{align}
with 
\begin{align}
C\left(\pi_{1:T}\right) & =\mathbf{\sigma}_{\text{x}}^{2}\left(q+rk^{2}\pi_{1}\right)\nonumber \\
 & +\mathbf{\sigma}_{\text{x}}^{2}\sum_{t=2}^{T}\left(q+rk^{2}\pi_{t}\right)\prod_{i=1}^{t-1}\left(a^{2}+\left(b^{2}k^{2}+2abk\right)\pi_{i}\right)\nonumber \\
 &\hspace{-1cm} +\sigma_{\text{d}}^{2}\sum_{t=2}^{T}\left(q+rk^{2}\pi_{t}\right)\sum_{i=1}^{t-1}\prod_{r=i+1}^{t-1}\left(a^{2}+\left(b^{2}k^{2}+2abk\right)\pi_{i}\right)\nonumber \\
 & -\sum_{\text{\ensuremath{t=1}}}^{T}\frac{\gamma\sigma^{2}}{\overline{g} \ln\pi_{t}}.\label{eq:OptimSingle_noise}
\end{align}
\end{prop}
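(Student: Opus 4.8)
The plan is to evaluate $\overline{J}_{1:T}(p_{1:T})$ in closed form using the scalar linear structure, and then to pass to the success probabilities by a monotone change of variables. First I would substitute the feedback law $u_t = k x_t z_t$ (which covers both $z_t=1$ and $z_t=0$) into \eqref{eq:perfect}, yielding the closed-loop recursion $x_{t+1} = (a+bk z_t) x_t + d_t$. The structural observation I would use repeatedly is that $x_t$ is a measurable function of $x_1$, $d_{1:t-1}$ and $z_{1:t-1}$ (equivalently of $x_1$, $d_{1:t-1}$ and $g_{1:t-1}$), hence independent of both $z_t$ and $d_t$; together with $\mathbb{E}[d_t]=0$, $z_t^2=z_t$ and $\mathbb{E}[z_t]=\pi_t$, this makes all cross terms vanish and all $z_t$-dependent factors factorize into the $\pi_t$.

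Setting $S_t := \mathbb{E}[x_t^2]$ and squaring the recursion, the term linear in $d_t$ has zero mean and the $z_t$-term factorizes, giving the affine recursion $S_{t+1} = \alpha_t S_t + \sigma_{\text{d}}^{2}$ with $\alpha_t := a^2 + (b^2k^2+2abk)\pi_t$ and $S_1 = \sigma_{\text{x}}^{2}$. Unrolling it gives, for $t \geqslant 2$, $S_t = \sigma_{\text{x}}^{2}\prod_{i=1}^{t-1}\alpha_i + \sigma_{\text{d}}^{2}\sum_{i=1}^{t-1}\prod_{j=i+1}^{t-1}\alpha_j$ (empty products being $1$). For the running cost, $u_t^2 = k^2 x_t^2 z_t$ and the independence of $z_t$ from $x_t$ give $\mathbb{E}[q x_t^2 + r u_t^2] = (q + rk^2\pi_t)S_t$; summing over $t=1,\dots,T$ and inserting the closed form of $S_t$ (keeping the $t=1$ term apart since $S_1$ carries no product/sum factor) reproduces exactly the first three sums of \eqref{eq:OptimSingle_noise}.

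It then remains to treat the communication terms. Under Rayleigh fading, $g_t$ is exponential with mean $\overline{g}$, so \eqref{eq:snr} gives $\pi_t = \pi(p_t) = \exp\!\big(-\gamma\sigma^2/(\overline{g}p_t)\big)$, which is a strictly increasing bijection from $p_t\in(0,\infty)$ onto $\pi_t\in(0,1)$, with $p_t=0$ corresponding to the limit $\pi_t\to 0$, and inverse $p_t = -\gamma\sigma^2/(\overline{g}\ln\pi_t)$. Substituting this inverse into \eqref{eq:PowerConstraint}: the constraint $p_t\geqslant 0$ becomes vacuous (it merely enforces $\pi_t>0$), whereas $p_t\leqslant P_{\text{max}}$ becomes $-\gamma\sigma^2/(\overline{g}\ln\pi_t) - P_{\text{max}} \leqslant 0$, i.e. the constraint in \eqref{eq:OptimProblem}; and replacing $\sum_t p_t$ by $-\sum_t \gamma\sigma^2/(\overline{g}\ln\pi_t)$ adds the last sum of \eqref{eq:OptimSingle_noise}. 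Hence $\overline{J}_{1:T}(p_{1:T}) = C(\pi_{1:T})$, and since the map $p_{1:T}\mapsto\pi_{1:T}$ is a bijection between the feasible sets, the two optimization problems coincide.

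I expect the main obstacle to be the careful bookkeeping in the expectation computation: justifying the measurability/independence claim that decouples $z_t$ from $x_t$ (which is what lets every cross term drop and every $z_t$-factor collapse to $\pi_t$), handling $z_t^2 = z_t$, and getting the nested product/sum indices right when unrolling the recursion for $S_t$. The change of variables is then routine, the only points deserving a word being the strict monotonicity and surjectivity of $\pi(\cdot)$ onto $(0,1)$ and the fact that $p_t\geqslant 0$ imposes no extra restriction on $\pi_t$.
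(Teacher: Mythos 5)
Your proof is correct and reaches the same closed form, but it organizes the expectation computation differently from the paper. The paper unrolls the closed-loop trajectory explicitly, writing $x_t=\prod_{\ell=1}^{t-1}(a+bkz_\ell)x_1+\sum_{\ell=1}^{t-1}\prod_{r=\ell+1}^{t-1}(a+bkz_r)d_\ell$, squares this expression inside the cost, kills the cross terms using $\mathbb{E}[d_\ell]=0$ and the mutual independence of the $d_i$, and then factorizes the products of $(a+bkz_i)^2$ via $\mathbb{E}[z_i]=\mathbb{E}[z_i^2]=\pi_i$ and $\mathbb{E}[z_iz_j]=\pi_i\pi_j$. You instead take the expectation one step at a time, obtaining the affine recursion $S_{t+1}=\alpha_tS_t+\sigma_{\text{d}}^2$ for the second moment and unrolling it; this only requires the one-step independence of $(z_t,d_t)$ from $x_t$ and avoids both the expansion of the squared double sum and the cross-time moment $\mathbb{E}[z_iz_j]$, so the bookkeeping is lighter while relying on exactly the same probabilistic facts. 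The change of variables $p_t=-\gamma\sigma^2/(\overline{g}\ln\pi_t)$ and the handling of the constraints are identical to the paper's. One incidental remark: your product $\prod_{j=i+1}^{t-1}\alpha_j$ in the $\sigma_{\text{d}}^2$ term carries the running index inside the factor, i.e.\ $\pi_j$, whereas the statement of the proposition writes $\pi_i$ there; your version is the correct one (and matches what the paper itself uses in Proposition~2), so the discrepancy is a typo in the statement rather than an error in your derivation.
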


\begin{proof}
See Appendix \ref{subsec:Proof-of-COP}.
\end{proof}
From Proposition \ref{prop:COP}, we see that the cost function $C\left(\pi_{1:T}\right)$
is multilinear w.r.t. the success probability vector $\pi_{1:T}$,
which means that, in general, the cost is neither linear, convex,
nor quasi-convex. The corresponding problem is therefore non-trivial, see
\citet{bao2015global,Yan07}. As will be shown in what follows, the
recursive structure of the problem can be exploited to determine the
optimal sequence of probabilities of success and therefore the optimal
sequence of transmit power levels. For that purpose, one decomposes
\eqref{eq:goal} as 
\begin{equation}
\overline{J}_{1:T}\left(p_{1:T}\right)=\overline{J}_{1:t-1}\left(p_{1:t-1}\right)+\overline{J}'_{t:T}\left(p_{1:T}\right)\label{eq:twoparts}
\end{equation}
where 
\[
\overline{J}_{1:t-1}\left(p_{1:t-1}\right)=\mathbb{E}_{z_{1:t-1},d_{1:t-1}}\left(\sum_{\ell=1}^{t-1}\left(qx_{\ell}^{2}+rk^{2}x_{\ell}^{2}z_{\ell}+p_{\text{\ensuremath{\ell}}}\right)\right)
\]
and 
\[
\overline{J}'_{t:T}\left(p_{1:T}\right)=\mathbb{E}_{z_{1:T},d_{1:T}}\left(\sum_{\ell=t}^{T}\left(qx_{\ell}^{2}+rk^{2}x_{\ell}^{2}z_{\ell}+p_{\ell}\right)\right).
\]

Furthermore, Proposition \ref{prop:Cost_tT} separates the terms which contains $p_t$ (or $\pi_t$) from $\overline{J}'_{t:T}\left(p_{1:T}\right)$.
\begin{prop}
\label{prop:Cost_tT}In \eqref{eq:twoparts}, $\overline{J}'_{t:T}\left(p_{1:T}\right)$
is expressed as 
\begin{align}
\overline{J}'_{t:T}\left(p_{1:T}\right) =&\mathbb{E}_{z_{1:t-1}d_{1:t-1}}\left[x_{t}^{2}\right]\overline{F}\left(p_{t:T}\right)\nonumber \\
 & +\sigma_{\text{d}}^{2}\overline{F}_{\text{s}}\left(p_{t+1:T}\right)+\sum_{\text{\ensuremath{\ell=t}}}^{T}p_{\ell},\label{eq:Jprime}
\end{align}
where, for all $t<T$ 
\begin{align*}
\overline{F}\left(p_{t:T}\right) & =\left(q+rk^{2}\pi_{t}\right)\\
 & +\sum_{\ell=t+1}^{T}\left(q+rk^{2}\pi_{\ell}\right)\prod_{i=t}^{\ell-1}\left(a^{2}+\left(2abk+b^{2}k^{2}\right)\pi_{i}\right)
\end{align*}
and for all $t<T-1$
\begin{align*}
\overline{F}_{\text{s}}\left(p_{t+1:T}\right) & =\sum_{\ell=t+1}^{T}\left(q+rk^{2}\pi_{\ell}\right)\\
 & \times\sum_{i=t}^{\ell-1}\prod_{r=i+1}^{\ell-1}\left(a^{2}+\left(2abk+b^{2}k^{2}\right)\pi_{r}\right).
\end{align*}
\end{prop}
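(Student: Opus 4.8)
The plan is to propagate the conditional second moment of the state in closed form and substitute it into the definition of $\overline{J}'_{t:T}$. First I would rewrite \eqref{eq:perfect} in closed-loop form: since $z_\ell\in\{0,1\}$ and $u_\ell=kx_\ell z_\ell$ by \eqref{eq:control} (which also covers the packet-loss case $z_\ell=0$, $u_\ell=0$), one has $x_{\ell+1}=(a+bkz_\ell)x_\ell+d_\ell$, and since $z_\ell^2=z_\ell$ the stage cost appearing in $\overline{J}'_{t:T}$ is $qx_\ell^2+rk^2x_\ell^2z_\ell+p_\ell$. Observe next that $x_t$ is a deterministic function of $x_1,z_{1:t-1},d_{1:t-1}$, so by independence the expectation defining $\overline{J}'_{t:T}$ factorises into an outer expectation over $(x_1,z_{1:t-1},d_{1:t-1})$, which produces the factor $\mathbb{E}_{z_{1:t-1},d_{1:t-1}}\!\left[x_t^2\right]$, and an inner expectation over $(z_{t:T},d_{t:T})$ taken conditionally on $x_t$.

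The key step is a scalar linear recursion for $m_\ell:=\mathbb{E}\!\left[x_\ell^2\mid x_t\right]$, $\ell\geqslant t$. Squaring the closed-loop dynamics and using that $d_\ell$ is zero-mean and independent of $(x_\ell,z_\ell)$, that $z_\ell$ is independent of $x_\ell$, and that $\mathbb{E}[z_\ell]=\mathbb{E}[z_\ell^2]=\pi_\ell$, the cross term vanishes and one obtains $m_{\ell+1}=\alpha_\ell\,m_\ell+\sigma_{\text{d}}^2$ with $\alpha_\ell:=a^2+(2abk+b^2k^2)\pi_\ell$ and $m_t=x_t^2$. Unrolling this first-order recursion gives, for all $\ell\geqslant t$,
\[
m_\ell=\Bigl(\prod_{i=t}^{\ell-1}\alpha_i\Bigr)x_t^2+\sigma_{\text{d}}^2\sum_{i=t}^{\ell-1}\prod_{r=i+1}^{\ell-1}\alpha_r,
\]
with the conventions that an empty product equals $1$ and an empty sum equals $0$.

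Then I would substitute this into the inner conditional expectation. Because $z_\ell$ is independent of $x_\ell$, $\mathbb{E}[x_\ell^2 z_\ell\mid x_t]=\pi_\ell m_\ell$, so the inner expectation of $\sum_{\ell=t}^{T}(qx_\ell^2+rk^2x_\ell^2z_\ell+p_\ell)$ equals $\sum_{\ell=t}^{T}(q+rk^2\pi_\ell)m_\ell+\sum_{\ell=t}^{T}p_\ell$. Inserting the closed form of $m_\ell$ and isolating the $\ell=t$ contribution (for which the product reduces to $1$ and the $\sigma_{\text{d}}^2$-sum is empty), one sees that the coefficient multiplying $x_t^2$ is exactly $\overline{F}(p_{t:T})$ and the coefficient multiplying $\sigma_{\text{d}}^2$ is exactly $\overline{F}_{\text{s}}(p_{t+1:T})$. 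Taking the remaining outer expectation replaces $x_t^2$ by $\mathbb{E}_{z_{1:t-1},d_{1:t-1}}\!\left[x_t^2\right]$ and leaves the deterministic terms unchanged, which yields \eqref{eq:Jprime}.

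The computations are elementary; the only points demanding care are the consistent use of the independence assumptions — each $z_\ell$ and $d_\ell$ being independent of the state and of all noises strictly prior to time $\ell$ — when factoring the second moments, and the handling of the empty-product/empty-sum conventions, which also cover the degenerate ranges $t=T$ and $t=T-1$ referred to in the statement. I do not anticipate any genuine obstacle beyond this bookkeeping.
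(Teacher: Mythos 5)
Your proof is correct. It reaches the same formulas by a slightly different organization of the computation: the paper's proof substitutes the explicitly unrolled trajectory $x_{\ell}=\prod_{i=t}^{\ell-1}(a+bkz_{i})x_{t}+\sum_{i=t}^{\ell-1}d_{i}\prod_{r=i+1}^{\ell-1}(a+bkz_{r})$ into the cost, squares it, and argues that all cross terms of the resulting double sum vanish at once (using $\mathbb{E}[d_{\ell}]=0$ and pairwise independence of the $d_{i}$), before taking the expectation over $z_{t:T}$ term by term. You instead propagate the conditional second moment $m_{\ell}=\mathbb{E}[x_{\ell}^{2}\mid x_{t}]$ through the one-step affine recursion $m_{\ell+1}=\alpha_{\ell}m_{\ell}+\sigma_{\text{d}}^{2}$ and unroll that scalar recursion, so each cross term is killed one step at a time and the product/sum structure of $\overline{F}$ and $\overline{F}_{\text{s}}$ emerges from the variation-of-constants formula rather than from expanding a square. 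The two are mathematically equivalent, but your version keeps the bookkeeping lighter and connects more directly to the forward recursion \eqref{eq:x2t} and the backward recursion of Proposition~\ref{prop:recursion} that the paper establishes separately afterwards; the paper's version has the (minor) advantage of exhibiting the closed-form expressions of $\overline{F}$ and $\overline{F}_{\text{s}}$ as explicit expectations over $z_{t:T}$, which it then reuses verbatim in the proof of Proposition~\ref{prop:COP}. All the independence claims you invoke ($z_{\ell}$ independent of $x_{\ell}$, $d_{\ell}$ independent of $(x_{\ell},z_{\ell})$, $z_{\ell}^{2}=z_{\ell}$) are consistent with the paper's assumptions, and your handling of the empty-product and empty-sum conventions correctly recovers the $\ell=t$ term and the degenerate ranges.
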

\begin{proof}
See Appendix \ref{subsec:Proof-of-Cost_tT}. 
\end{proof}
In \eqref{eq:Jprime}, $\overline{F}\left(p_{t:T}\right)$ depends on $p_t$ (or $\pi_t$) while $\overline{F}_{\text{s}}\left(p_{t+1:T}\right)$ is independent of them. These two terms can be evaluated by Proposition \ref{prop:recursion}.
\begin{prop}
\label{prop:recursion}$\overline{F}\left(p_{t:T}\right)$ and $\overline{F}_{s}\left(p_{t+1:T}\right)$
can be evaluated using the following backward recursions
\[
\overline{F}\left(p_{t:T}\right)=\left(q+rk^{2}\pi_{t}\right)+\left(a^{2}+\pi_{t}\left(2abk+b^{2}k^{2}\right)\right)\overline{F}\left(p_{t+1:T}\right)
\]
for all $t\leqslant T-1$ and 
\[
\overline{F}_{\text{s}}\left(p_{t:T}\right)=\overline{F}\left(p_{t:T}\right)+\overline{F}_{\text{s}}\left(p_{t+1:T}\right)
\]
for all $t\leqslant T-2$.
\end{prop}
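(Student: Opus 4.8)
The plan is to start from the closed-form expressions for $\overline{F}\left(p_{t:T}\right)$ and $\overline{F}_{\text{s}}\left(p_{t+1:T}\right)$ obtained in Proposition~\ref{prop:Cost_tT} and to verify each of the two recursions by a direct manipulation of the finite sums and products they involve. To lighten the notation I would write $\alpha_i := a^{2}+\left(2abk+b^{2}k^{2}\right)\pi_i$ and $\beta_\ell := q+rk^{2}\pi_\ell$, so that
\[
\overline{F}\left(p_{t:T}\right)=\beta_t+\sum_{\ell=t+1}^{T}\beta_\ell\prod_{i=t}^{\ell-1}\alpha_i,\qquad \overline{F}_{\text{s}}\left(p_{t+1:T}\right)=\sum_{\ell=t+1}^{T}\beta_\ell\sum_{i=t}^{\ell-1}\prod_{r=i+1}^{\ell-1}\alpha_r,
\]
with the usual conventions that an empty product equals $1$ and an empty sum equals $0$.

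For the first recursion I would split off the term $\ell=t$, which contributes $\beta_t$, and in the remaining sum over $\ell=t+1,\dots,T$ factor the leading factor $\alpha_t$ out of $\prod_{i=t}^{\ell-1}\alpha_i=\alpha_t\prod_{i=t+1}^{\ell-1}\alpha_i$. The factored sum is exactly $\alpha_t\,\overline{F}\left(p_{t+1:T}\right)$, which yields $\overline{F}\left(p_{t:T}\right)=\beta_t+\alpha_t\overline{F}\left(p_{t+1:T}\right)$ for $t\leqslant T-1$; the base case $\overline{F}\left(p_{T:T}\right)=\beta_T=q+rk^{2}\pi_T$ is immediate since the sum over $\ell$ is then empty.

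For the second recursion I would first rewrite $\overline{F}_{\text{s}}$ with $t$ as its lowest index, $\overline{F}_{\text{s}}\left(p_{t:T}\right)=\sum_{\ell=t}^{T}\beta_\ell\sum_{i=t-1}^{\ell-1}\prod_{r=i+1}^{\ell-1}\alpha_r$, and then peel off both the outer term $\ell=t$ and the innermost term $i=t-1$. The $\ell=t$ term equals $\beta_t$; the $i=t-1$ contribution for each $\ell\geqslant t+1$ equals $\beta_\ell\prod_{r=t}^{\ell-1}\alpha_r$; and what remains of the double sum is precisely $\overline{F}_{\text{s}}\left(p_{t+1:T}\right)$. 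Since $\beta_t+\sum_{\ell=t+1}^{T}\beta_\ell\prod_{r=t}^{\ell-1}\alpha_r=\overline{F}\left(p_{t:T}\right)$, this gives $\overline{F}_{\text{s}}\left(p_{t:T}\right)=\overline{F}\left(p_{t:T}\right)+\overline{F}_{\text{s}}\left(p_{t+1:T}\right)$ for $t\leqslant T-2$; the base value $\overline{F}_{\text{s}}\left(p_{T-1:T}\right)=\overline{F}\left(p_{T-1:T}\right)+\overline{F}\left(p_{T:T}\right)$ is checked by writing out the two-term double sum explicitly.

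There is no real analytic difficulty in this proof: the only point requiring care is the index bookkeeping --- in particular the index shift between $\overline{F}_{\text{s}}\left(p_{t:T}\right)$ and $\overline{F}_{\text{s}}\left(p_{t+1:T}\right)$ and the correct treatment of the empty (boundary) products and sums --- which is where a slip would be most likely to occur. Alternatively, both identities could be proved by a downward induction on $t$ starting from the explicit base cases, the inductive step being the same algebra; I would nevertheless favour the direct computation above since it is shorter and makes the role of each term transparent.
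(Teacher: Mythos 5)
Your proof is correct, but it takes a genuinely different route from the paper's. The paper does not touch the closed-form sums at all: it goes back to the cost functional, writes $\overline{J}'_{t:T}\left(p_{1:T}\right)=\mathbb{E}\left[\left(q+rk^{2}z_{t}\right)x_{t}^{2}+p_{t}\right]+\overline{J}'_{t+1:T}\left(p_{1:T}\right)$, substitutes the representation of $\overline{J}'_{t+1:T}$ from Proposition~\ref{prop:Cost_tT}, propagates $\mathbb{E}\left[x_{t+1}^{2}\right]=\left(a^{2}+\pi_{t}\left(2abk+b^{2}k^{2}\right)\right)\mathbb{E}\left[x_{t}^{2}\right]+\sigma_{\text{d}}^{2}$ through the state equation, and then reads off both recursions by matching the regrouped expression against the template $\mathbb{E}\left[x_{t}^{2}\right]\overline{F}\left(p_{t:T}\right)+\sigma_{\text{d}}^{2}\overline{F}_{\text{s}}\left(p_{t+1:T}\right)+\sum_{\ell}p_{\ell}$. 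You instead verify the identities directly from the explicit sum--product formulas by peeling off the leading factor $\alpha_{t}$ (respectively the $\ell=t$ and $i=t-1$ terms); your index bookkeeping, empty-product conventions and base values all check out, and your reindexing of $\overline{F}_{\text{s}}\left(p_{t:T}\right)$ is consistent with how Proposition~\ref{prop:Cost_tT} parametrizes its argument. What the paper's derivation buys is an explanation of where the recursions come from (the one-step cost-to-go decomposition) together with an implicit re-derivation of Proposition~\ref{prop:Cost_tT}; what yours buys is a shorter, self-contained verification that needs no expectations, no state dynamics, and no identification of coefficients inside a regrouped expectation, which is the one mildly informal step in the paper's version. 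Both establish the stated recursions.
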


\begin{proof}
See Appendix \ref{subsec:Proof-of-recursion}.
\end{proof}
These backward recursions are initialized considering the transmission
power $p_{T}$ minimizing \eqref{eq:goal}.
\begin{prop}
\label{prop:bwrecursion}The transmission power $p_{T}$ at time $T$
minimizing \eqref{eq:goal} is $p_{T}=0$ and leads to
\[
\overline{F}\left(p_{T}\right)=q
\]
and
\[
\overline{F}_{s}\left(p_{T}\right)=q.
\]
\end{prop}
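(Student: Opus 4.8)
The plan is to minimize the cost \eqref{eq:goal} over $p_T$ only, with $p_1,\dots,p_{T-1}$ held fixed, and to show that the minimizer is $p_T=0$. First I would isolate the dependence on $p_T$. Taking $t=T$ in the decomposition \eqref{eq:twoparts} gives $\overline{J}_{1:T}(p_{1:T})=\overline{J}_{1:T-1}(p_{1:T-1})+\overline{J}'_{T:T}(p_{1:T})$, and the first summand does not depend on $p_T$. Computing $\overline{J}'_{T:T}$ directly (only the index $\ell=T$ survives in the defining sum), and using that $x_T$ is a deterministic function of $x_1,d_{1:T-1},z_{1:T-1}$ while $z_T$ is independent of those variables, one obtains $\overline{J}'_{T:T}(p_{1:T})=\big(q+rk^2\pi_T\big)\,m_T+p_T$, where $m_T:=\mathbb{E}_{z_{1:T-1},d_{1:T-1}}[x_T^2]\geqslant 0$ depends only on $p_{1:T-1}$. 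This simultaneously identifies the terminal values used to start the recursions of Proposition~\ref{prop:recursion}: $\overline{F}(p_T)=q+rk^2\pi_T$, and, with $\overline{F}_s$ of an empty argument set to $0$, $\overline{F}_s(p_T)=\overline{F}(p_T)=q+rk^2\pi_T$; the perturbation variance $\sigma_{\text{d}}^{2}$ plays no role at this terminal step.

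Next I would minimize the map $p_T\mapsto m_T q+m_T r k^2\pi(p_T)+p_T$ over the feasible interval $[0,P_{\text{max}}]$ of \eqref{eq:PowerConstraint}, which contains $0$. Dropping the constant $m_T q$, it remains to minimize $m_T r k^2\pi(p_T)+p_T$. On the feasible set each of these two terms is non-negative ($m_T r k^2\geqslant 0$, $\pi(\cdot)\geqslant 0$, $p_T\geqslant 0$), and both vanish at $p_T=0$ because $\pi(0)=\Pr[0\geqslant\gamma]=0$ by \eqref{eq:snr} (the SNR threshold $\gamma$ being positive). Hence $p_T=0$ attains the minimum, which establishes the first assertion.

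Substituting the optimal value $\pi_T=\pi(0)=0$ into the terminal expressions derived in the first step then yields $\overline{F}(p_T)=q$ and $\overline{F}_s(p_T)=q$, which are exactly the initial values needed for the backward recursions of Proposition~\ref{prop:recursion}.

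I do not anticipate a real difficulty here: once the $p_T$-dependence of \eqref{eq:goal} has been correctly extracted, the argument is elementary and needs no convexity, quasi-convexity or Lagrangian reasoning, and feasibility of $p_T=0$ is immediate. The only point deserving care is the terminal-stage bookkeeping --- verifying that $\overline{J}'_{T:T}$ depends on $p_T$ solely through the scalar $q+rk^2\pi_T$, that the multiplicative coefficient $m_T$ is genuinely $p_T$-free (this uses the independence of the packet-loss indicator $z_T$ from $x_T$, itself a consequence of the i.i.d. channel assumption and of the mutual independence of the $d_t$), and that no $\overline{F}_s$ contribution is present at index $T$ so that $\overline{F}_s(p_T)$ coincides with $\overline{F}(p_T)$.
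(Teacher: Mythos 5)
Your proposal is correct and follows essentially the same route as the paper: decompose the cost via \eqref{eq:twoparts} at $t=T$, observe that $\overline{J}'_{T:T}=\mathbb{E}[x_T^2](q+rk^2\pi_T)+p_T$ with the coefficient $\mathbb{E}[x_T^2]$ free of $p_T$, conclude $p_T=0$ (hence $\pi_T=0$) by nonnegativity of both $p_T$-dependent terms, and read off $\overline{F}(p_T)=q$. The only (harmless) divergence is in obtaining $\overline{F}_{\text{s}}(p_T)=q$: you read it directly from the definition of $\overline{F}_{\text{s}}$ with the empty-product convention, which is cleaner than the paper's route of expanding $\overline{J}'_{T-1:T}$ explicitly and identifying $\overline{F}_{\text{s}}(p_T)$ by comparison with the formula of Proposition~\ref{prop:Cost_tT}.
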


\begin{proof}
See Appendix \ref{subsec:Proof-of-bwrecursion}.
\end{proof}

We can then determine $\pi_t$ minimizing \eqref{eq:OptimSingle_noise} when $\pi_{t'}$ is fixed for all $t'=1\dots,T,\ t\neq t'$. This is shown in Proposition~\ref{prop:minimum}.

\begin{prop}
\label{prop:minimum} Assume a Rayleigh fading law with mean $\overline{g}$ for $g_t$. Consider some $t\in\left\{ 1,\dots,T-1\right\} $
and assume that $\pi_{t'}$ is fixed for all $t'=1\dots,T,$ $t'\neq t$.
The value of $\pi_{t}$ minimizing \eqref{eq:OptimSingle_noise} with
the constraint \eqref{eq:PowerConstraint-1} is either $\pi_{t}=0$
or $\pi_{t}=\min\left(e^{-\frac{\gamma\sigma^{2}}{P_{\max}\overline{g}}},\pi^{0}\right)$,
where $\pi^{0}$ is such that $e^{-2}<\pi^{0}$ and 
\[
\mathbb{E}_{z_{1:t-1}d_{1:t-1}}\left[x_{t}^{2}\right]\frac{\partial}{\partial\pi_{t}}\overline{F}\left(p_{t:T}\right)+\text{\ensuremath{\frac{\gamma\sigma^{2}}{\pi^{0}\ln^{2}\pi^{0}\overline{g}}}}=0.
\]
\end{prop}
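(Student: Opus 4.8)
The plan is to freeze all success probabilities $\pi_{t'}$ with $t'\neq t$ and reduce the minimization of \eqref{eq:OptimSingle_noise} over $\pi_t$ to a one‑dimensional problem whose objective has an explicitly describable shape. By Proposition~\ref{prop:COP} this is the same as minimizing $\overline{J}_{1:T}(p_{1:T})$ over $p_t$, and combining the splitting \eqref{eq:twoparts} at index $t$ with \eqref{eq:Jprime} gives
$$\overline{J}_{1:T}(p_{1:T}) = \overline{J}_{1:t-1}(p_{1:t-1}) + \mathbb{E}_{z_{1:t-1}d_{1:t-1}}[x_t^2]\,\overline{F}(p_{t:T}) + \sigma_{\text{d}}^2\,\overline{F}_{\text{s}}(p_{t+1:T}) + \sum_{\ell=t}^{T}p_\ell .$$
Here $\overline{J}_{1:t-1}(p_{1:t-1})$, the second moment $\mathbb{E}_{z_{1:t-1}d_{1:t-1}}[x_t^2]$ and $\overline{F}_{\text{s}}(p_{t+1:T})$ do not involve $p_t$, whereas by the backward recursion of Proposition~\ref{prop:recursion},
$$\overline{F}(p_{t:T}) = \big(q+rk^2\pi_t\big) + \big(a^2+\pi_t(2abk+b^2k^2)\big)\,\overline{F}(p_{t+1:T})$$
is \emph{affine} in $\pi_t$, with slope $\partial_{\pi_t}\overline{F}(p_{t:T}) = rk^2 + (2abk+b^2k^2)\,\overline{F}(p_{t+1:T})$. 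Under the Rayleigh assumption $\pi_t = \exp(-\gamma\sigma^2/(\overline{g}p_t))$, hence $p_t = -\gamma\sigma^2/(\overline{g}\ln\pi_t)$, and the bound $0\le p_t\le P_{\text{max}}$ becomes $\pi_t\in[0,\pi_{\max}]$ with $\pi_{\max}:=\exp(-\gamma\sigma^2/(P_{\text{max}}\overline{g}))\in(0,1)$. Writing $m_t:=\mathbb{E}_{z_{1:t-1}d_{1:t-1}}[x_t^2]\ge 0$, the problem reduces to minimizing
$$\phi(\pi_t):= m_t\,\overline{F}(p_{t:T}) - \frac{\gamma\sigma^2}{\overline{g}\ln\pi_t}$$
over $\pi_t\in[0,\pi_{\max}]$, with $\phi(0):=m_t\,(q+a^2\overline{F}(p_{t+1:T}))$ obtained as the limit $\pi_t\to0^+$ (using $-1/\ln\pi_t\to 0$).

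Next I would describe the shape of $\phi$ on $(0,\pi_{\max}]$. Since $\overline{F}(p_{t:T})$ is affine in $\pi_t$, differentiation yields
$$\phi'(\pi_t) = m_t\,\partial_{\pi_t}\overline{F}(p_{t:T}) + \frac{\gamma\sigma^2}{\overline{g}\,\pi_t\ln^2\pi_t}, \qquad \phi''(\pi_t) = -\frac{\gamma\sigma^2}{\overline{g}}\cdot\frac{\ln\pi_t+2}{\pi_t^2\ln^3\pi_t}.$$
On $(0,1)$ one has $\ln\pi_t<0$, so $\phi''(\pi_t)$ has the sign of $\ln\pi_t+2$: $\phi$ is strictly concave on $(0,e^{-2})$ and strictly convex on $(e^{-2},1)$. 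Equivalently, $\pi\mapsto 1/(\pi\ln^2\pi)$ is strictly decreasing on $(0,e^{-2})$, strictly increasing on $(e^{-2},1)$, tends to $+\infty$ at both endpoints, and attains its minimum $e^2/4$ at $\pi=e^{-2}$. Hence $\phi'$ follows the same pattern: it tends to $+\infty$ as $\pi_t\to0^+$ and as $\pi_t\to1^-$, is decreasing on $(0,e^{-2})$, increasing on $(e^{-2},1)$, and has global minimum $m_t\,\partial_{\pi_t}\overline{F}(p_{t:T}) + \gamma\sigma^2 e^2/(4\overline{g})$, attained at $\pi_t=e^{-2}$.

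I would then conclude by a short case distinction on this minimum. If $m_t\,\partial_{\pi_t}\overline{F}(p_{t:T}) + \gamma\sigma^2 e^2/(4\overline{g})\ge 0$ — in particular if $\partial_{\pi_t}\overline{F}(p_{t:T})\ge 0$, or if $m_t=0$ — then $\phi'\ge 0$ on $(0,\pi_{\max}]$, $\phi$ is nondecreasing, and the minimizer is $\pi_t=0$. Otherwise $\phi'$ has exactly two zeros, $\pi_-\in(0,e^{-2})$ and $\pi^0\in(e^{-2},1)$, so $\phi$ increases on $(0,\pi_-)$, decreases on $(\pi_-,\pi^0)$ and increases on $(\pi^0,1)$; thus $\pi_-$ is a local maximum and the only interior local minimum is $\pi^0$, which satisfies $\phi'(\pi^0)=0$, i.e. the displayed equation of the statement, together with $\pi^0>e^{-2}$. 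Intersecting with $[0,\pi_{\max}]$ (and recalling $\pi_{\max}<1$): if $\pi_{\max}\le\pi_-$ the minimizer is $\pi_t=0$; if $\pi_-<\pi_{\max}\le\pi^0$ the minimum is attained at an endpoint, i.e. $\pi_t=0$ or $\pi_t=\pi_{\max}=\min(\pi_{\max},\pi^0)$; if $\pi^0<\pi_{\max}$ the minimum is $\pi_t=0$ or $\pi_t=\pi^0=\min(\pi_{\max},\pi^0)$, the endpoint $\pi_{\max}$ being ruled out since $\phi$ increases on $(\pi^0,\pi_{\max})$. In every case the minimizer is $\pi_t=0$ or $\pi_t=\min\big(e^{-\gamma\sigma^2/(P_{\text{max}}\overline{g})},\pi^0\big)$, which is the claim.

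The main obstacle is the monotonicity/convexity analysis of $\pi_t\mapsto -1/\ln\pi_t$ — the inflection at $e^{-2}$ and the blow‑up of $\phi'$ at both ends of $(0,1)$ — compounded by the fact that the affine slope $\partial_{\pi_t}\overline{F}(p_{t:T})$, which governs whether an interior stationary point exists at all, need not be sign‑definite since $2abk+b^2k^2$ may be negative. One must also keep careful track of the endpoints $\pi_t=0$ (i.e. $p_t=0$) and $\pi_t=\pi_{\max}$; this endpoint bookkeeping is precisely what produces the $\min(\cdot,\cdot)$ appearing in the statement.
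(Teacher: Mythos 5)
Your proof is correct and follows essentially the same route as the paper: reduce to a one-dimensional problem in $\pi_t$, observe that $\partial\overline{J}_{1:T}/\partial\pi_t$ is a $\pi_t$-independent constant plus $\gamma\sigma^2/(\overline{g}\,\pi_t\ln^2\pi_t)$, which is minimized at $\pi_t=e^{-2}$, and then split into cases according to the sign of that minimum and the position of the upper bound $e^{-\gamma\sigma^2/(P_{\max}\overline{g})}$ relative to $e^{-2}$ and $\pi^0$. If anything, your endpoint bookkeeping is slightly tighter than the paper's (you correctly discard the upper bound as a candidate when $\pi^0<e^{-\gamma\sigma^2/(P_{\max}\overline{g})}$, whereas the paper's proof still lists it), which matches the stated conclusion exactly.
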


\begin{proof}
See Appendix \ref{subsec:Proof-of-minimum}.

Consider a transmission power policy $p_{1:T}^{\left(0\right)}$ and
its corresponding $\pi_{1:T}^{\left(0\right)}$. From Proposition~\ref{prop:minimum},
for all $t=1,\dots,T$, one can obtain
\[
\pi_{t}^{\star}=\arg\min_{\pi_{t}\in\mathcal{I}}C\left(\pi_{1}^{\left(0\right)},\dots,\pi_{t-1}^{\left(0\right)},\pi_{t},\pi_{t+1}^{\left(0\right)},\dots,\pi_{T}^{\left(0\right)}\right)
\]
where $\mathcal{I}=\left\{0,\min\left(e^{-\frac{\gamma\sigma^{2}}{P_{\text{max}}\overline{g}}},\pi^{0}\right)\right\}$.

A set 
\[
\mathcal{P}=\left\{ \left[\pi_{1}^{\star},\pi_{2}^{\left(0\right)},\dots,\pi_{T}^{\left(0\right)}\right],\dots,\left[\pi_{1}^{\left(0\right)},\dots,\pi_{T-1}^{\left(0\right)},\pi_{T}^{\star}\right]\right\} 
\]
of associated success vectors is obtained. The vector
\begin{align*}
\pi_{1:T}^{\left(1\right)}= & \arg\min_{\pi_{1:T}\in\mathcal{P}}C\left(\pi_{1:T}\right)
\end{align*}
and the associated transmission power policy $p_{1:T}^{\left(1\right)}$
provides a reduced cost. The above process may be repeated as illustrated
in Algorithm~\ref{Alg:algOneAgent} to obtain an improved transmission
power policy.
\end{proof}
\begin{algorithm}
\caption{Transmission power optimization}
\label{Alg:algOneAgent}

\begin{algorithmic}
\global\long\def\algorithmicrequire{\textbf{Input:}}%
\REQUIRE Time horizon $T$, $\sigma_{\text{x}}^{2}=\mathbb{E}\left[x_{1}^{2}\right]$,
$a,\ b,\ k,\ q$, $r$, $\sigma_{\text{d}}^{2}$;

\STATE Initialization: $p_{1}^{\left(0\right)}=\dots p_{T}^{\left(0\right)}=0$,
$k=1;$

\global\long\def\algorithmicrequire{\textbf{Output:}}%

\REQUIRE Power policy $p_{1:T}$;

\WHILE{$k\leqslant$$k_{\text{max}}$}

\FOR {$t=T:-1:1$}

\STATE Using $p_{1}^{\left(k-1\right)},\dots,p_{T}^{\left(k-1\right)}$,
\eqref{eq:pi}, and $\overline{F}\left(p_{T}\right)=q$, determine
$\overline{F}\left(p_{t:T}\right)$ by backward recursion using Proposition
\ref{prop:recursion};

\ENDFOR

\FOR {$t=1:T$}

\STATE From $p_{t-1}^{\left(k-1\right)}$ and $\mathbb{E}\left[x_{t-1}^{2}\right]$,
evaluate $\mathbb{E}\left[x_{t}^{2}\right]$ using \eqref{eq:xt};

\STATE Determine the minimum of $\frac{\partial\overline{J}}{\partial\pi_{t}}$
using \eqref{eq:dJdpi} obtained at $\pi_{t}=\min\left(e^{-2},e^{-\frac{\gamma\sigma^{2}}{P_{\text{max}}\overline{g}}}\right)$;

\IF {$\left.\frac{\partial\overline{J}}{\partial\pi_{t}}\right|_{\pi_{t}}<0$}

\STATE Determine $\pi^{0}$ and its corresponding power $p^{0}$
using \eqref{eq:pi};

\STATE Determine $\pi_{t}^{*}\in\left\{ 0,\min\left(\pi^{0},e^{-\frac{\gamma\sigma^{2}}{P_{\text{max}}\overline{g}}}\right)\right\} $
minimizing the cost;

\STATE Determine the power $p_{t}^{*}$ corresponding to $\pi_{t}^{*}$
using \eqref{eq:pi};

\ELSE

\STATE $p_{t}^{*}=p_{t}^{\left(k-1\right)}$;

\ENDIF

\ENDFOR

\STATE $\left[p_{1}^{\left(k\right)},\dots,p_{T}^{\left(k\right)}\right]$ is one of the
element of $\mathcal{P}=\left\{ [p_{1}^{*},p_{2}^{\left(k-1\right)},\dots,p_{T}^{\left(k-1\right)}]\right.$,$\dots$,
$\left.[p_{1}^{\left(k-1\right)},p_{2}^{\left(k-1\right)},\dots,p_{T}^{*}]\right\} $
that minimizes the cost \eqref{eq:twoparts}

\ENDWHILE

\end{algorithmic} \label{algo:algorithm} 
\end{algorithm}

In the loop of Algorithm~\ref{Alg:algOneAgent}, each element
of the vector $[p_{1}^{\left(k\right)},\dots,p_{T}^{\left(k\right)}]$
is replaced with its updated version $[p_{1}^{\left(k+1\right)},\dots,p_{T}^{\left(k+1\right)}]$
which induces a smaller cost. The latter property combined with the fact that the cost is bounded guarantees the convergence of the proposed algorithm. The obtained power policy is then obtained by assuming the knowledge of the average power of $x_{1}$, that is $\mathbb{E}\left[x_{1}^{2}\right]$, and not the value of $x_{1}$ itself.

\section{Numerical performance analysis}

\label{sec:Numerical-experiments}

To study the behavior of Algorithm~\ref{Alg:algOneAgent}, consider a system with $a=1.1$, $b=-1$,
and $k=1$, as well as a realization $x_{1}=1$. For the communication, $P_{\text{max}}=3$ and $\gamma\sigma^{2}/\overline{g}=1$.
Moreover, $q=1,\ r=0.5$, $T=30$.

Consider a first scenario with $\sigma_{\text{d}}^{2}=0$ (perturbation-free
case). Figure~\ref{fig:curves} illustrates $p_{t}$ for the considered nominal values of the parameters mentioned before and for
alternate parameter values where a single change of one component
is performed. This illustrates the impact of each parameter on the
transmission power policy.

\begin{figure}[h]
\includegraphics[scale=0.65]{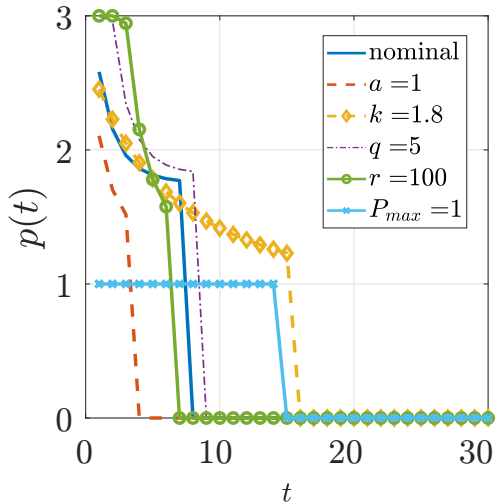}

\caption{Impact of the model parameters on the obtained transmit power control policy in the case of no perturbations on the system dynamics ($\sigma_{\text{d}}^{2}=0$).}
\label{fig:curves} 
\end{figure}

From Figure~\ref{fig:curves}, we observe that for the nominal value
of the parameters, transmissions occur at the beginning and stop at
$t=8$. Decreasing $P_{\max}$ leads to transmissions with less power
at the beginning to stop at $t=15.$ Decreasing $a$ leads to a more
stable open-loop system, requiring less transmissions. Choosing $k=1.8$,
which, in close-loop, is less stable, leads to an increase of communications.

If $q$ increases, the weight of the state in the cost function increases
leading to more control effort. On the contrary, a larger value of
$r$ putting more weight on the transmission costs, reduces the number
of transmissions.

Let $a=1.1$, $b=-1$,  $k=1.8$, $P_{\text{max}}=3$, $\gamma\sigma^{2}/\overline{g}=1$
and $\sigma_{x}^{2}=1$. Moreover, $q=1,\ r=0.5$ and $T=30$. To
illustrate the impact of perturbation, Figure~\ref{fig:curves-1}
shows the power control policy with different values of $\sigma_{d}^{2}$.
\begin{figure}[h]
\includegraphics[scale=0.6]{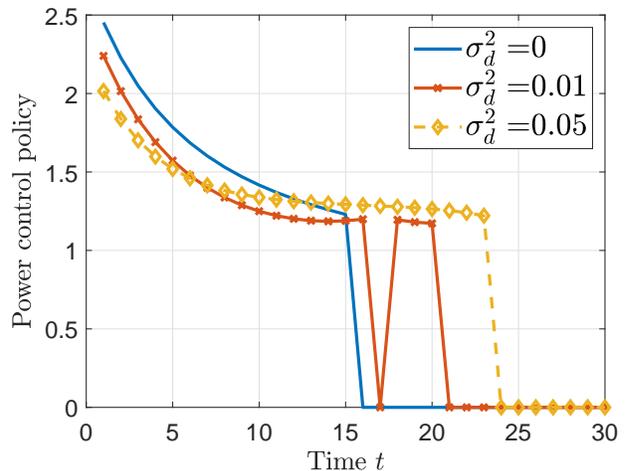}

\caption{Obtained transmission power policy for different values of the variance  $\sigma_{\text{d}}^{2}$ of the state perturbation, when
$T=30$}
\label{fig:curves-1} 
\end{figure}

From Figure \ref{fig:curves-1}, we observe that, when $\sigma_{\text{d}}^{2}$
increases, there are less time slot where $p_{t}\neq0$, indicating
more communication will occur. This is due to the fact that increasing the
perturbation drives the system away from equilibrium and leads an increase
need of communications.

Consider $a=1.1$, $b=-1$, $k=1.8$, $P_{\text{max}}=3$, $\gamma\sigma^{2}/\overline{g}=1$,
$q=1,\ r=0.5$, $\sigma_{x}^{2}=1$ and $\sigma_{d}^{2}=0.05$. Different
values of the time horizon $T$ have been considered. The average
value over $10000$ samples of (\ref{eq:goal}) is compared for three
different policies: sending with full transmit power $P_{\text{max}}=3$,
open loop policy (sending nothing), sending with power determined
by Algorithm~\ref{Alg:algOneAgent}, see Figure~\ref{fig:horizonT}.

\begin{figure}[h]
\includegraphics[scale=0.6]{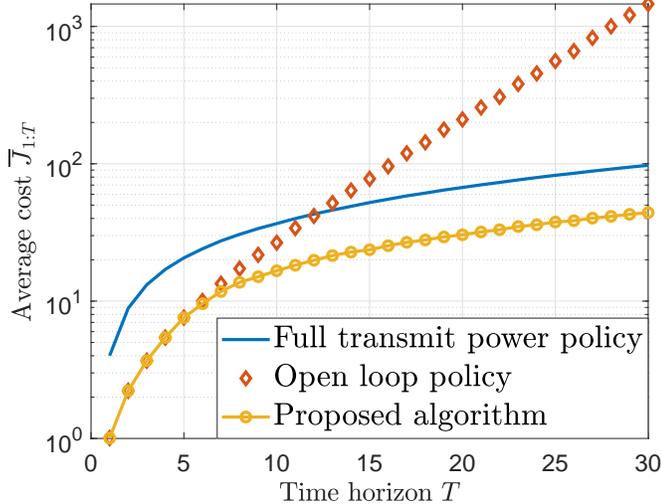}

\caption{Impact of the choice of the power control policy: For $T=30$ using the proposed policy allows the combined cost to be divided by $50$.}
\label{fig:horizonT} 
\end{figure}

Figure \ref{fig:horizonT} shows that, when $T\leqslant6$, the curve
of the proposed algorithm overlaps the one of open loop policy. The reason
is that the communication between the system and the controller is
not worthy, when the control time horizon is not enough large. When $T>6$,
the performance of the power control policies obtained by the proposed algorithm
is the best compared to the algorithm where we always send with full
transmit power and the algorithm with no communication.

\section{Conclusion}

\label{sec:conclusion} In this paper, we consider a scalar system which control input is evaluated by a remote controller from information sent by the system over a noisy wireless channel. We focus on the optimization of the transmit power implemented at the system side for reporting the
state to the controller. We have shown that determining
the power control policy that minimizes the combination of the dynamical system cost and the wireless transmission energy is a non-trivial
optimization problem. We have proposed an iterative algorithm to evaluate a transmission power policy achieving a trade-off between the system control cost and the energy spent for wireless transmission. In absence of perturbation on the system dynamics, the optimal transmit power is seen to be decreasing with time. The power profile depends on the values for the system and control parameters. The obtained profiles differ significantly from the profiles consisting in transmitting at full power or not transmitting at all (open loop scenario). Significant gains 
can be observed when comparing the proposed policy to the aforementioned conventional policies. This work will be extended to the vector case and to situations where the wireless resources have to be shared by several system-controller pairs which may generate interference. A significant extension would be to address the challenging case of non-linear systems, which supposes to revisit the proof techniques used in this paper.

\bibliography{ifacconf.bbl}







\appendix

\section{Proofs}

\subsection{Proof of Proposition \ref{prop:COP}\label{subsec:Proof-of-COP}}

For a Rayleigh fading channel model, one has
\begin{equation}
\pi\left(p_{t}\right) =\int_{\frac{\gamma\sigma^{2}}{p_{t}}}^{+\infty}\frac{1}{\overline{g}}\exp\left(-\frac{g}{\overline{g}}\right)\mathrm{d}g
 =\exp\left(-\frac{\gamma\sigma^{2}}{p_{t}\overline{g}}\right).\label{eq:pi}
\end{equation}
From (\ref{eq:pi}), one observes that searching for $p_{1:T}$
minimizing \eqref{eq:goal} under the power constraints \eqref{eq:PowerConstraint}
is equivalent to searching for $\pi_{1:T}$ minimizing \eqref{eq:goal}
under the constraints 
\begin{equation}
-\frac{\gamma\sigma^{2}}{\ln\pi_{t}\overline{g}}\leqslant P_{\text{max}},t=1,\dots,T.\label{eq:PowerConstraint-1}
\end{equation}
Moreover, combining (\ref{eq:noise}) and (\ref{eq:control}), for
all $t\geqslant1$, one gets 
\begin{align}
x_{t+1} & =\left(a+bkz_{t}\right)x_{t}+d_{t}\nonumber \\
 & =\prod_{\ell=1}^{t}\left(a+bkz_{\ell}\right)x_{1}+\sum_{\ell=1}^{t}\prod_{r=\ell+1}^{t}\left(a+bkz_{r}\right)d_{\ell}\label{eq:xt}
\end{align}
where, by convention $\prod_{r=t+1}^{t}\left(a+bkz_{r}\right)=1.$
From \eqref{eq:xt}, one observes that $x_{t+1}$ depends on $x_{1}$,
on $z_{1},\dots,z_{t}$, and on $d_{1},\ \dots,d_{t}$. Now, since
$z_{t}\sim\text{Ber}\left(\pi_{t}\right)$, one has 
\begin{equation}
\mathbb{E}\left[z_{t}\right] =1\times\Pr\left[z_{t}=1\right]+0\times\Pr\left[z_{t}=0\right]
 =\pi_{t}\label{eq:Ez}
\end{equation}
and similarly, 
\begin{equation}
\mathbb{E}\left[z_{t}^{2}\right]=\pi_{t}.\label{eq:Ez2}
\end{equation}
Moreover, since $z_{t}$ and $z_{t'}$ are independent when $t\neq t'$,
\begin{equation}
\mathbb{E}\left[z_{t}z_{t'}\right]=\pi_{t}\pi_{t'}.\label{eq:Ez3}
\end{equation}
Then, \eqref{eq:OptimSingle_noise} is obtained by introducing (\ref{eq:xt})
in (\ref{eq:goal}), and using (\ref{eq:Ez}), (\ref{eq:Ez2}), and
(\ref{eq:Ez3}) as follows 
\begin{align*}
 & C\left(\pi_{1:T}\right)\\
 & =\mathbb{E}_{z_{1:T}d_{1:T}}\left[\sum_{t=1}^{T}\left(qx_{t}^{2}+rk^{2}x_{t}^{2}z_{t}+p_{t}\right)\right]\\
 & =\mathbb{E}_{z_{1:T}d_{1:T}}\left[\sum_{t=1}^{T}\left(q+rk^{2}z_{t}\right)x_{t}^{2}\right]+\sum_{\text{\ensuremath{t=1}}}^{T}p_{t}\\
 & =\mathbb{E}_{z_{1:T}d_{1:T}}\left[\left(q+rk^{2}z_{1}\right)x_{1}^{2}+\sum_{t=2}^{T}\left(q+rk^{2}z_{t}\right)\times\right.\\
 & \left.\left(\prod_{i=1}^{t-1}\left(a+bkz_{i}\right)x_{1}+\sum_{i=1}^{t-1}d_{i}\prod_{r=i+1}^{t-1}\left(a+bkz_{r}\right)\right)^{2}\right]\\
 & +\sum_{\text{\ensuremath{t=1}}}^{T}p_{t}
\end{align*}
since $d_{i},\ d_{j}$ are independent if $i\neq j$, and $\mathbb{E}\left[d_{\ell}\right]=0$,
\begin{align*}
 & C\left(\pi_{1:T}\right)\\
 & =\mathbb{E}_{z_{1:T}d_{1:T}}\left[\left(q+rk^{2}z_{1}\right)x_{1}^{2}+\sum_{t=2}^{T}\left(q+rk^{2}z_{t}\right)\text{\ensuremath{\times}}\right.\\
 & \left(\text{\ensuremath{x_{1}^{2}\prod_{i=1}^{t-1}\left(a+bkz_{i}\right)^{2}}}+\underbrace{\sum_{i=1}^{t-1}d_{i}^{2}\prod_{r=i+1}^{t-1}\left(a+bkz_{r}\right)^{2}}_{d_{i},\ d_{j}\text{ are independent if }i\neq j}+\right.\\
 & \left.\left.\underbrace{2\left(\prod_{i=1}^{t-1}\left(a+bkz_{i}\right)x_{1}\right)\left(\sum_{i=1}^{t-1}\prod_{r=i+1}^{t-1}\left(a+bkz_{r}\right)d_{i}\right)}_{\mathbb{E}\left[d_{\ell}\right]=0}\right)\right]\\
 & +\sum_{\text{\ensuremath{t=1}}}^{T}p_{t}
\end{align*}
the last term of the expectation is vanishing,
\begin{align*}
 & C\left(\pi_{1:T}\right)\\
 & =\mathbb{E}_{z_{1:T}d_{1:T}}\left[x_{1}^{2}\left(\left(q+rk^{2}z_{1}\right)+\sum_{t=2}^{T}\left(q+rk^{2}z_{t}\right)\times\right.\right.\\
 & \left.\prod_{i=1}^{t-1}\left(a+bkz_{i}\right)^{2}\right)+\sum_{t=2}^{T}\left(q+rk^{2}z_{t}\right)\sum_{i=1}^{t-1}d_{i}^{2}\times\\
 & \left.\prod_{r=i+1}^{t-1}\left(a+bkz_{r}\right)^{2}\right]+\sum_{\text{\ensuremath{t=1}}}^{T}p_{t}\\
 & =\mathbf{\sigma}_{\text{x}}^{2}\left(q+rk^{2}\pi_{1}\right)\\
 & +\mathbf{\sigma}_{\text{x}}^{2}\sum_{t=2}^{T}\left(q+rk^{2}\pi_{t}\right)\prod_{i=1}^{t-1}\left(a^{2}+\left(b^{2}k^{2}+2abk\right)\pi_{i}\right)\\
 & +\sigma_{\text{d}}^{2}\sum_{t=2}^{T}\left(q+rk^{2}\pi_{t}\right)\sum_{i=1}^{t-1}\prod_{r=i+1}^{t-1}\left(a^{2}+\left(b^{2}k^{2}+2abk\right)\pi_{i}\right)\\
 & -\sum_{\text{\ensuremath{t=1}}}^{T}\frac{\gamma\sigma^{2}}{ \overline{g} \ln\pi_{t}}
\end{align*}

\subsection{Proof of Proposition \ref{prop:Cost_tT}\label{subsec:Proof-of-Cost_tT}}

Developing the cost function, one gets
\begin{equation}
\begin{aligned} & \overline{J}_{t:T}\left(p_{1:T}\right)\\
= & \mathbb{E}_{z_{1:T}d_{1:T-1}}\left[\sum_{\ell=t}^{T}\left(qx_{\ell}^{2}+rk^{2}x_{\ell}^{2}z_{\ell}+p_{\ell}\right)\right]\\
= & \mathbb{E}_{z_{1:T}d_{1:T-1}}\left[\left(q+rk^{2}z_{t}\right)x_{t}^{2}+\sum_{\ell=t+1}^{T}\left(q+rk^{2}z_{\ell}\right)\times\right.\\
 & \left.\left(\prod_{i=t}^{\ell-1}\left(a+bkz_{i}\right)x_{t}+\sum_{i=t}^{\ell-1}d_{i}\prod_{r=i+1}^{\ell-1}\left(a+bkz_{r}\right)\right)^{2}\right]+\\
 & \sum_{\text{\ensuremath{\ell=t}}}^{T}p_{\ell}.
\end{aligned}
\end{equation}
Then
\begin{equation}
\begin{aligned} & \overline{J}_{t:T}\left(p_{1:T}\right)\\
= & \mathbb{E}_{z_{1:T}d_{1:T-1}}\left[\left(q+rk^{2}z_{t}\right)x_{t}^{2}+\sum_{\ell=t+1}^{T}\left(q+rk^{2}z_{\ell}\right)\text{\ensuremath{\times}}\right.\\
 & \left(\text{\ensuremath{x_{t}^{2}\prod_{i=t}^{\ell-1}\left(a+bkz_{i}\right)^{2}}}+\sum_{i=t}^{\ell-1}d_{i}^{2}\prod_{r=i+1}^{\ell-1}\left(a+bkz_{r}\right)^{2}+\right.\\
 & \left.\left.2\left(\prod_{i=t}^{\ell-1}\left(a+bkz_{i}\right)x_{t}\right)\left(\sum_{i=t}^{\ell-1}\prod_{r=i+1}^{\ell-1}\left(a+bkz_{r}\right)d_{i}\right)\right)\right]\\
 & +\sum_{\text{\ensuremath{\ell=t}}}^{T}p_{\ell}
\end{aligned}
\label{eq:x_indep_zd}
\end{equation}

Since the expected value of $x_{t}^{2}$ is independent of $z_{t:T}$
and $d_{t:T-1}$, \eqref{eq:x_indep_zd} becomes 
\[
\begin{aligned} & \overline{J}'_{t:T}\left(p_{1:T}\right)\\
= & \mathbb{E}_{z_{1:T}d_{1:T-1}}\left[x_{t}^{2}\left(q+rk^{2}z_{t}\right)+\sum_{\ell=t+1}^{T}\left(q+rk^{2}z_{\ell}\right)\times\right.\\
 & \left.\left(x_{t}^{2}\prod_{i=t}^{\ell-1}\left(a+bkz_{i}\right)^{2}+\sum_{i=t}^{\ell-1}d_{i}^{2}\prod_{r=i+1}^{\ell-1}\left(a+bkz_{r}\right)^{2}\right)\right]\\
 & +\sum_{\text{\ensuremath{\ell=t}}}^{T}p_{\ell}\\
= & \mathbb{E}_{z_{1:T}d_{1:T-1}}\left[x_{t}^{2}\left(\left(q+rk^{2}z_{t}\right)+\sum_{\ell=t+1}^{T}\left(q+rk^{2}z_{\ell}\right)\times\right.\right.\\
 & \left.\prod_{i=t}^{\ell-1}\left(a+bkz_{i}\right)^{2}\right)+\sum_{\ell=t+1}^{T}\left(q+rk^{2}z_{\ell}\right)\times\\
 & \left.\sum_{i=t}^{\ell-1}d_{i}^{2}\prod_{r=i+1}^{\ell-1}\left(a+bkz_{r}\right)^{2}\right]
 +\sum_{\text{\ensuremath{\ell=t}}}^{T}p_{\ell}.
\end{aligned}
\]
Then, introducing 
\[
\begin{aligned} & \overline{F}\left(p_{t:T}\right)\\
= & \mathbb{E}_{z_{t:T}}\left[\left(q+rk^{2}z_{t}\right)+\sum_{\ell=t+1}^{T}\left(q+rk^{2}z_{\ell}\right)\prod_{i=t}^{\ell-1}\left(a+bkz_{i}\right)^{2}\right]
\end{aligned}
\]
and 
\[
\begin{aligned} & \overline{F}_{\text{s}}\left(p_{t+1:T}\right)\\
= & \mathbb{E}_{z_{t+1:T-1}}\left[\sum_{\ell=t+1}^{T}\left(q+rk^{2}z_{\ell}\right)\sum_{i=t}^{\ell-1}\prod_{r=i+1}^{\ell-1}\left(a+bkz_{r}\right)^{2}\right],
\end{aligned}
\]
one obtains \eqref{eq:Jprime}.

\subsection{Proof of Proposition \ref{prop:recursion}\label{subsec:Proof-of-recursion}}

From Proposition~\ref{prop:Cost_tT}, one has that:
\begin{align*}
\overline{J}'_{t:T}\left(p_{1:T}\right) & =\mathbb{E}_{z_{1:t}d_{1:t-1}}\left[\left(q+rk^{2}z_{t}\right)x_{t}^{2}+p_{t}\right]\\
 & +\overline{J}'_{t+1:T}\left(p_{1:T}\right)\\
 & =\mathbb{E}_{z_{1:t}d_{1:t-1}}\left[\left(q+rk^{2}z_{t}\right)x_{t}^{2}\right]\\
 & +\mathbb{E}_{z_{1:t}d_{1:t}}\left[x_{t+1}^{2}\right]\overline{F}\left(p_{t+1:T}\right)\\
 & +\sigma_{d}^{2}\overline{F}_{s}\left(p_{t+2:T}\right)+\sum_{\text{\ensuremath{\ell=t}}}^{T}p_{\ell}.
\end{align*}
Using \eqref{eq:Ez} and \eqref{eq:Ez2}, one obtains: 
\begin{align*}
\overline{J}'_{t:T}\left(p_{1:T}\right) & =\mathbb{E}_{z_{1:t-1}d_{1:t-1}}\left[x_{t}^{2}\right]\left(q+rk^{2}\pi_{t}\right)\\
 & +\mathbb{E}_{z_{1:t}d_{1:t}}\left[\left(\left(a+bkz_{t}\right)x_{t}+d_{t}\right)^{2}\right]\overline{F}\left(p_{t+1:T}\right)\\
 & +\sigma_{\text{d}}^{2}\overline{F}_{s}\left(p_{t+2:T}\right)+\sum_{\text{\ensuremath{\ell=t}}}^{T}p_{\ell}\\
 & =\left(q+rk^{2}\pi_{t}\right)\mathbb{E}_{z_{1:t-1}d_{1:t-1}}\left[x_{t}^{2}\right]\\
 & \hspace{-1.5cm}+\mathbb{E}_{z_{1:t}d_{1:t}}\left[\left(a+bkz_{t}\right)^{2}x_{t}^{2}+d_{t}^{2}+2\left(a+bkz_{t}\right)x_{t}d_{t}\right]\times\\
 & \overline{F}\left(p_{t+1:T}\right)+\sigma_{d}^{2}\overline{F}_{s}\left(p_{t+2:T}\right)+\sum_{\text{\ensuremath{\ell=t}}}^{T}p_{\ell}
\end{align*}
Then 
\[
\begin{aligned}\overline{J}'_{t:T}\left(p_{1:T}\right) & =\left(q+rk^{2}\pi_{t}\right)\mathbb{E}_{z_{1:t-1}d_{1:t-1}}\left[x_{t}^{2}\right]\\
 & \hspace{-1.5cm}+\left(\mathbb{E}_{z_{1:t-1}d_{1:t-1}}\left[x_{t}^{2}\right]\left(a^{2}+\pi_{t}\left(2abk+b^{2}k^{2}\right)\right)+\sigma_{\text{d}}^{2}\right)\times\\
 & \overline{F}\left(p_{t+1:T}\right)+\sigma_{d}^{2}\overline{F}_{s}\left(p_{t+2:T}\right)+\sum_{\text{\ensuremath{\ell=t}}}^{T}p_{\ell}\\
 & =\mathbb{E}_{z_{1:t-1}d_{1:t-1}}\left[x_{t}^{2}\right]\left(\left(q+rk^{2}\pi_{t}\right)+\right.\\
 & \left.\left(a^{2}+\pi_{t}\left(2abk+b^{2}k^{2}\right)\right)\overline{F}\left(p_{t+1:T}\right)\right)\\
 & +\sigma_{\text{d}}^{2}\overline{F}\left(p_{t+1:T}\right)+\sigma_{\text{d}}^{2}\overline{F}_{s}\left(p_{t+2:T}\right)+\sum_{\text{\ensuremath{\ell=t}}}^{T}p_{\ell}\\
 & =\mathbb{E}_{z_{1:t-1}d_{1:t-1}}\left[x_{t}^{2}\right]\overline{F}\left(p_{t:T}\right)\\
 & +\sigma_{\text{d}}^{2}\overline{F}_{s}\left(p_{t+1:T}\right)+\sum_{\text{\ensuremath{\ell=t}}}^{T}p_{\ell}.
\end{aligned}
\]
leading to the backward recursions 
\begin{align*} 
\overline{F}\left(p_{t:T}\right)
= & \left(q+rk^{2}\pi_{t}\right)\\
 &+\left(a^{2}+\pi_{t}\left(2abk+b^{2}k^{2}\right)\right)\overline{F}\left(p_{t+1:T}\right)
\end{align*}
\[
 \overline{F}_{s}\left(p_{t+1:T}\right)
= \overline{F}\left(p_{t+1:T}\right)+\overline{F}_{s}\left(p_{t+2:T}\right).
\]

\subsection{Proof of Proposition \ref{prop:bwrecursion}\label{subsec:Proof-of-bwrecursion}}

Writing (\ref{eq:twoparts}) at $t=T-1$, one gets
\[
\overline{J}_{1:T}\left(p_{1:T}\right)=\overline{J}_{1:T-1}\left(p_{1:T-1}\right)+\overline{J}'_{T:T}\left(p_{1:T}\right).
\]
Since $\overline{J}_{1:T-1}\left(p_{1:T-1}\right)$ does not depend
on $p_{T}$, the value of $p_{T}$ minimizing $\overline{J}_{1:T}\left(p_{1:T}\right)$
has to minimize
\begin{align*}
\overline{J}'_{T:T}\left(p_{1:T}\right) & =\mathbb{E}_{z_{1:T}d_{1:T-1}}\left[qx_{T}^{2}+rk^{2}z_{T}x_{T}^{2}+p_{T}\right]\\
 & =\mathbb{E}_{z_{1:T-1}d_{1:T-1}}\left[x_{T}^{2}\right]\overline{F}\left(p_{T}\right)+p_{T},
\end{align*}
with 
$\overline{F}\left(p_{T}\right)=\mathbb{E}_{z_{T}}\left[q+rk^{2}z_{T}\right]$.
From the above expressions, one sees that $p_{T}=0$ and thus $z_{T}=0$
(absence of transmission) minimizes $\overline{J}'_{T:T}\left(p_{1:T}\right)$.
When $z_{T}=0$, one gets $\overline{F}\left(0\right)=q.$

At time $t=T-1$ 
\begin{align*}
 & \overline{J}'_{T-1:T}\left(p_{1:T}\right)\\
= & \mathbb{E}_{z_{1:T}d_{1:T-1}}\left[qx_{T}^{2}+p_{T}+\right.\\
 & \left.\left(q+rk^{2}z_{T-1}\right)x_{T-1}^{2}+p_{T-1}\right]\\
= & \mathbb{E}_{z_{1:T}d_{1:T-1}}\left[q\left(\left(a+bkz_{T-1}\right)x_{T-1}+d_{T-1}\right)^{2}+\right.\\
 & \left.p_{T}+\left(q+rk^{2}z_{T-1}\right)x_{T-1}^{2}+p_{T-1}\right]\\
= & \mathbb{E}_{z_{1:T}d_{1:T-1}}\left[q\left(a+bkz_{T-1}\right)^{2}x_{T-1}^{2}+qd_{T-1}^{2}\right.\\
 & \left.+p_{T}+\left(q+rk^{2}z_{T-1}\right)x_{T-1}^{2}+p_{T-1}\right]\\
= & \mathbb{E}_{z_{1:T-2}d_{1:T-2}}\left[x_{T-1}^{2}\right]\left(\left(q+rk^{2}\pi_{T-1}\right)+\right.\\
 & \left.\left(a^{2}+\left(b^{2}k^{2}+2abk\right)\pi_{T-1}\right)q\right)+\sigma_{\text{d}}^{2}q+\\
 & \sum_{\ell=T-1}^{T}p_{\ell}.
\end{align*}
Using Proposition~\ref{prop:recursion} with $\overline{F}\left(p_{T}\right)=q$,
one can derive 
\begin{align*}
\overline{F}\left(p_{T-1:T}\right)
= & \left(q+rk^{2}\pi_{T-1}\right)\\
&+\left(a^{2}+\left(2abk+b^{2}k^{2}\right)\pi_{T-1}\right)q
\end{align*}
and
\begin{align*}
 \overline{J}'_{T-1:T}\left(p_{1:T}\right)
= & \mathbb{E}_{z_{1:T-2}d_{1:T-2}}\left[x_{T-1}^{2}\right]\overline{F}\left(p_{T-1:T}\right)+\\
 & \sigma_{\text{d}}^{2}q+\sum_{\ell=T-1}^{T}p_{\ell}.
\end{align*}

From Proposition \ref{prop:Cost_tT}
\begin{align*}
\overline{J}'_{T-1:T}\left(p_{1:T}\right) & =\mathbb{E}_{z_{1:T-2}d_{1:T-2}}\left[x_{T-1}^{2}\right]\overline{F}\left(p_{T-1:T}\right)\\
 & +\sigma_{\text{d}}^{2}\overline{F}_{\text{s}}\left(p_{T}\right)+\sum_{\text{\ensuremath{\ell=T-1}}}^{T}p_{\ell}
\end{align*}
one can obtain
$\overline{F}_{s}\left(p_{T}\right)=q$.

\subsection{Proof of Proposition~\ref{prop:minimum}\label{subsec:Proof-of-minimum}}

To determine the value of $\pi_{t}$ which minimizes (\ref{eq:goal}),
consider (\ref{eq:twoparts}) and evaluate
\[
\frac{\partial\overline{J}_{1:T}}{\partial\pi_{t}}=\frac{\partial}{\partial\pi_{t}}\left(\overline{J}_{1:t-1}\left(p_{1:t-1}\right)+\overline{J}'_{t:T}\left(p_{1:T}\right)\right).
\]
Since $\overline{J}_{1:t-1}\left(p_{1:t-1}\right)$ does not depend
on $\pi_{t}$, using Proposition \ref{prop:Cost_tT}, one obtains 
\begin{align}
\frac{\partial\overline{J}_{1:T}}{\partial\pi_{t}} & =\frac{\partial}{\partial\pi_{t}}\overline{J}'_{t:T}\left(p_{1:T}\right)\nonumber \\
 & =\mathbb{E}_{z_{1:t-1}d_{1:t-1}}\left[x_{t}^{2}\right]\frac{\partial}{\partial\pi_{t}}\overline{F}\left(p_{t:T}\right)\nonumber \\
 & +\frac{\partial p_{t}}{\partial\pi_{t}}+\sigma_{\text{d}}^{2}\frac{\partial}{\partial\pi_{t}}\overline{F}_{s}\left(p_{t+1:T}\right)\nonumber \\
 & =\mathbb{E}_{z_{1:t-1}d_{1:t-1}}\left[x_{t}^{2}\right]\left(rk^{2}+\left(2abk+b^{2}k^{2}\right)\overline{F}\left(p_{t+1:T}\right)\right)\nonumber \\
 & +\frac{\gamma\sigma^{2}}{\pi_{t}  \overline{g} \ln^{2}\pi_{t}}.\label{eq:dJdpi}
\end{align}
We can determine $\overline{F}\left(p_{t+1:T}\right)$ using Proposition~\ref{prop:recursion}.
We also need evaluate $\mathbb{E}_{z_{1:t-1}d_{1:t-1}}\left[x_{t}^{2}\right]$
by forward recursion. From \eqref{eq:xt}, we have 
\begin{align}
 & \mathbb{E}_{z_{1:t}d_{1:t}}\left[x_{t+1}^{2}\right]\nonumber\\
= & \mathbb{E}_{z_{1:t}d_{1:t}}\left[(a+bkz_{t})^{2}x_{t}^{2}+d_{t}^{2}+2d_{t}(a+bkz_{t})x_{t}\right]\nonumber \\
= & \left(a^{2}+\pi_{t}\left(b^{2}k^{2}+2abk\right)\right)\mathbb{E}_{z_{1:t-1}d_{1:t-1}}\left[x_{t}^{2}\right]+\sigma_{\text{d}}^{2}.\label{eq:x2t}
\end{align}
The noise does not appear explicitly in \eqref{eq:dJdpi}. But from
\eqref{eq:x2t}, the additive noise still affect on the derivation
$\frac{\partial\overline{J}_{1:T}}{\partial\pi_{t}}$ through $\mathbb{E}_{z_{1:t-1}}\left[x_{t}^{2}\right]$.

We have a function 
\[
\begin{aligned}\frac{\partial\overline{J}_{1:T}}{\partial\pi_{t}}= & \mathbb{E}_{z_{1:t-1}d_{1:t-1}}\left[x_{t}^{2}\right]\times\\
 & \left(rk^{2}+\left(2abk+b^{2}k^{2}\right)\overline{F}\left(p_{t+1:T}\right)\right)+
 \frac{\gamma\sigma^{2}}{\pi_{t} \overline{g} \ln^{2}\pi_{t}}.
\end{aligned}
\]
From (\ref{eq:PowerConstraint-1}), one has 
\[
0\leqslant\pi_{t}\leqslant e^{-\frac{\gamma\sigma^{2}}{P_{\text{max}}\overline{g}}}\leqslant1.
\]

The first component of $\frac{\partial\overline{J}_{1:T}}{\partial\pi_{t}}$
is independent of $\pi_{t}$ and the second component $\text{\ensuremath{\frac{\gamma\sigma^{2}}{\pi_{t}  \overline{g} \ln^{2}\pi_{t}}}}$
is a function of $\pi_{t}$. Consequently, $\frac{\partial\overline{J}}{\partial\pi_{t}}$
is minimum when $\ensuremath{\frac{\gamma\sigma^{2}}{\pi_{t} \overline{g} \ln^{2}\pi_{t}}}$
is minimum. The derivative
\begin{equation}
\frac{\partial}{\partial\pi_{t}}\left(\ensuremath{\frac{\gamma\sigma^{2}}{\pi_{t} \overline{g} \ln^{2}\pi_{t}}}\right)=\frac{-\ln^{2}\pi_{t}\overline{g}-2\ln\pi_{t}\overline{g}}{\left(\pi_{t} \overline{g} \ln^{2}\pi_{t} \right)^{2}}\label{eq:derivative}
\end{equation}
vanishes when$\ln\pi_{t}=0$ or $\ln\pi_{t}=-2$, \emph{i.e.}, when
$\pi_{t}=1$ or $\pi_{t}=e^{-2}$. When $\pi_{t}\in\left]0,e^{-2}\right[$,
\eqref{eq:derivative} is negative and when $\pi_{t}\in\left]e^{-2},1\right[$,
\eqref{eq:derivative} is positive. Thus the minimum of $\frac{\partial\overline{J}_{1:T}}{\partial\pi_{t}}$
is obtained for $\pi_{t}=e^{-2}$ if $e^{-2}<\exp(-\frac{\gamma\sigma^{2}}{P_{\text{max}}\overline{g}})$.
Else, the minimum of $\frac{\partial\overline{J}_{1:T}}{\partial\pi_{t}}$
is obtained for $\pi_{t}=e^{-\frac{\gamma\sigma^{2}}{P_{\text{max}}\overline{g}}}$.
The minimum is obtained for $\pi'=\min\left(e^{-2},e^{-\frac{\gamma\sigma^{2}}{P_{\text{max}}\overline{g}}}\right)$.

Assume that the minimum value of the derivative $\partial\overline{J}_{1:T}\slash\partial\pi_{t}|_{\pi_{t}=\pi'}$,
is negative, $\exists\delta$, $\mathcal{A}=\left]\pi'-\delta,\pi'\right]$
such that, $\pi'\in\mathcal{A}$, $\partial\overline{J}_{1:T}\slash\partial\pi_{t}<0$
leading to a decrease of $\overline{J}_{1:T}$.

Assume that the minimum value of $\partial\overline{J}_{1:T}\slash\partial\pi_{t}$
is negative and is obtained when $\pi_{t}=e^{-\frac{\gamma\sigma^{2}}{P_{\text{max}}\overline{g}}}$.
The minimum of $\overline{J}_{1:T}$ over the interval $\left]0,e^{-\frac{\gamma\sigma^{2}}{P_{\text{max}}\overline{g}}}\right]$
is then either obtained when $\pi_{t}=0$ or when $\pi_{t}=e^{-\frac{\gamma\sigma^{2}}{P_{\text{max}}\overline{g}}}$.

Assume now that the minimum value of $\partial\overline{J}_{1:T}\slash\partial\pi_{t}$
is negative and obtained when $\pi_{t}=e^{-2}<e^{-\frac{\gamma\sigma^{2}}{P_{\text{max}}\overline{g}}}$.
The minimum of $\overline{J}_{1:T}$ over the interval $\left]0,e^{-\frac{\gamma\sigma^{2}}{P_{\text{max}}\overline{g}}}\right]$
is then obtained when $\pi_{t}=0$, $\pi_{t}=e^{-\frac{\gamma\sigma^{2}}{P_{\text{max}}\overline{g}}}$,
or $\pi_{t}=\pi^{0}$, where $\pi^{0}$ is such that $e^{-2}<\pi^{0}<e^{-\frac{\gamma\sigma^{2}}{P_{\text{max}}\overline{g}}}$
and $\mathbb{E}_{z_{1:t-1}}\left[x_{t}^{2}\right]\frac{\partial}{\partial\pi_{t}}\left(\overline{F}\left(p_{t},\dots,p_{T}\right)\right)+\text{\ensuremath{\frac{\gamma\sigma^{2}}{\pi^{0} \overline{g} \ln^{2}\pi^{0}}}}=0$. 
\end{document}